\definecolor{webred}{rgb}{.8,0,0}
\definecolor{webbrown}{rgb}{.6,0,0}
\definecolor{webgreen}{rgb}{0,0.5,0}
\definecolor{webdkgreen}{rgb}{0,0.3,0}
\definecolor{webblue}{rgb}{0,0,0.5}
\numberwithin{equation}{section}
\providecommand{\href}[2]{#2}
\newcommand{\be}{\begin{eqnarray}}
\newcommand{\beq}{\begin{eqnarray}}
\newcommand{\ee}{\end{eqnarray}}
\def\e#1\e{\begin{equation}#1\end{equation}}
\def\ea#1\ea{\begin{align}#1\end{align}}
\theoremstyle{plain}% default
\newtheorem{thm}{Theorem}[section]
\newtheorem*{thm*}{Theorem}
\newtheorem{prop}[thm]{Proposition}
\theoremstyle{definition}
\begin{document}

\setlength{\parindent}{0cm}
\setlength{\baselineskip}{1.5em}
\title{Airy Equation for the Topological String \\ Partition Function in a Scaling Limit}

\author{Murad Alim$^{1,2}$\footnote{\tt{alim@physics.harvard.edu}}, Shing-Tung Yau$^1$\footnote{\tt yau@math.harvard.edu} and Jie Zhou$^3$\footnote{\tt jzhou@perimeterinstitute.ca}
\\
\small $^1$ Department of Mathematics, Harvard University,\\ \small 1 Oxford Street, Cambridge, MA 02138, USA\\
\small $^2$ Jefferson Physical Laboratory, Harvard University, \\ \small 17 Oxford Street, Cambridge, MA 02138, USA\\
\small $^3$ Perimeter Institute for Theoretical Physics,\\ \small 31 Caroline Street North, Waterloo, ON N2L 2Y5, Canada }

\date{}
\maketitle

\abstract{We use the polynomial formulation of the holomorphic anomaly equations governing perturbative topological string theory to derive the free energies in a scaling limit to all orders in perturbation theory for any Calabi-Yau threefold. The partition function in this limit satisfies an Airy differential equation in a rescaled topological string coupling. One of the two solutions of this equation gives the perturbative expansion and the other solution provides geometric hints of the non-perturbative structure of topological string theory. Both solutions can be expanded naturally around strong coupling. }

\clearpage

%%%%%%%%%%%%%%%%%%%%%%%%%%%%%%%%%%%%%%%%%%%%%%%%%%%%%%%%%%%%%

%\tableofcontents

%%%%%%%%%%%%%%%%%%%%%%%%%%%%%%%%%%%%%%%%%%%%%%%%%%%

\section{Introduction}

The perturbative expansion of string theories is asymptotic \cite{Gross:1988ib,Shenker:1990uf} which raises questions about the non-perturbative completion and definition of these theories. It has been very fruitful to address these questions within topological string theory which can be connected to Chern-Simons theory and matrix models, see the excellent review~\cite{Marino:2012zq} and references therein. The lessons obtained from this study are expected to give general insights into string perturbation theory.

Topological string theory is based on the nonlinear sigma model which maps the string world-sheet into a target Calabi-Yau (CY) threefold.  The topological string partition function is given by a perturbative expansion in the topological string coupling $\lambda$, summing the free energies $\mathcal{F}^{(g)}$ over the world-sheet genera $g$:
\begin{equation}\label{eqdefofZtoppert}
\mathcal{Z}_{top} = \exp \sum_{g=0}^{\infty} \lambda^{2g-2} \mathcal{F}^{(g)}\,.
\end{equation}

The perturbative free energies $\mathcal{F}^{(g)}$ satisfy the holomorphic anomaly equations of Bershadsky, Cecotti, Ooguri and Vafa (BCOV) \cite{Bershadsky:1993ta,Bershadsky:1993cx} which are recursive differential equations. These were solved using Feynman diagrams \cite{Bershadsky:1993cx}, which give $\mathcal{F}^{(g)}$ the form:
\begin{equation}\label{eqcombinatorics}
\mathcal{F}^{(g)}=\sum_{\Gamma\in  \partial \overline{\mathcal{M}}_{g}} {1\over| \mathrm{Aut}(\Gamma)|}{\omega(\Gamma)}+f^{(g)}\,,
\end{equation}
where $\overline{\mathcal{M}}_{g}$ is the Deligne-Mumford compactification of the moduli space of Riemann surfaces of arithmetic genus $g$. The stratification of this moduli space can be captured by the decorated dual graphs $\Gamma$ of the Riemann surfaces. The first summand in the above equation is a summation over the decorated graphs corresponding to the degenerate Riemann surfaces. The weight $\omega(\Gamma)$ is given by the Feynman rules: for example, an edge in the dual graph corresponds to a node in the degeneration and gives a propagator. The contribution to $\mathcal{F}^{(g)}$ from the smooth Riemann surfaces of arithmetic genus $g$ is the holomorphic ambiguity $f^{(g)}$ which is fixed by boundary conditions. The factors $1/|\mathrm{Aut}(\Gamma)|$  in the above sum are universal constants coming from the structure of $\overline{\mathcal{M}}_{g}$ and are independent of the geometric data of the target CY threefold whose information is encoded in the weights $\omega(\Gamma)$ obtained from the Feynman rules.

The Feynman diagrams become intractable at higher genus since their number grows rapidly. It was shown that this simplifies significantly due to a polynomial structure of the higher genus free energies in finitely many generators \cite{Yamaguchi:2004bt,Alim:2007qj}. The polynomial structure is non-trivial and stems from further decomposing the vertices of the Feynman diagrams at higher genus into simple expressions  as well as from a differential ring structure of the generators. This leads to the much milder polynomial growth of the number of terms at each genus. In this paper we use this polynomial structure to determine at all genera the pieces of the free energies coming from certain monomials at every genus $g$ which include contributions from the most degenerate Riemann surfaces in $\partial \overline{\mathcal{M}}_{g}$. 
To select these terms in the full partition function we rescale the polynomial generators as well as the topological string coupling such that only these terms survive a scaling limit. We derive a modified Bessel differential equation in a rescaled topological string coupling whose solution assembles these monomials at all genera. This solution allows us furthermore to expand this piece of the partition function around strong coupling $\lambda\rightarrow \infty$. The second linearly independent solution of the differential equation is non-perturbative at $\lambda=0$ and provides geometric hints of non-perturbative topological strings. 

Furthermore, a change of variables allows us to transform the differential equation into an Airy differential equation. The latter is the hallmark of two dimensional topological gravity studied using matrix models \cite{Witten:1990hr,Kontsevich:1992ti} and appears in many discussions of non-perturbative phenomena \cite{Marino:2012zq} and especially also in the recent definition of non-perturbative topological strings on some non-compact CY manifolds given in Ref.~\cite{Grassi:2014zfa}. It is surprising that the results of the present work are universal for topological strings on any CY geometry which is subject to the holomorphic anomaly equations even if there is no manifest relation to Chern Simons theory or matrix models such as compact CY threefolds. 

The polynomial structure of the higher genus free energies was already used in Ref.~\cite{Yamaguchi:2004bt}  to determine in principle the coefficients of certain monomials in the generators at all genera for the quintic. In a similar context, where the polynomial generators are realized as quasi modular forms for a non-compact CY given by the canonical bundle over $\mathbbm{P}^1\times\mathbbm{P}^1$, this has been addressed in Ref.~\cite{Huang:2009md}. The mirror curve of the latter geometry is related to the to the matrix model description of ABJM theory \cite{Aharony:2008ug} and has been used in Ref.~\cite{Drukker:2010nc} to obtain strong coupling results. In Ref.~\cite{Fuji:2011km} the all genus free energies for the matrix model are summed up in a certain limit in the moduli space and the coefficients are related to the Airy function which is very close in spirit to our work.\footnote{We would like to thank Marcos Marino for pointing out this reference to us and its relation to our results.} 

The structure of this note is as follows. We review the polynomial structure of topological strings in Sec.~\ref{secpolystructure}. In Sec.~\ref{secBessel} we derive a differential equation which determines the partition function in a limit to all genera and discuss its transformation into the Airy equation as well as the strong coupling expansion of its solutions. We conclude with discussions in Section \ref{seccon}.

\section{Polynomial structure of topological strings}\label{secpolystructure}
In this section we review the polynomial formulation \cite{Yamaguchi:2004bt,Alim:2007qj} of the holomorphic anomaly equations \cite{Bershadsky:1993cx}. Let $\mathcal{M}$ be the moduli space of a Calabi-Yau threefold, which can be the moduli space of complexified K\"ahler structures of a CY threefold $Y$ or the moduli space of complex structures of its mirror $X$. 
In the following we will use local complex coordinates $z^i\,, i=1,\dots,n=\textrm{dim}\,\mathcal{M}$ and restrict to a one-dimensional slice of the moduli space by choosing a local coordinate $z=z^*$ such that $C_{***}$ defined below is non-zero and we set $z^i=0\,,i\ne*$. We use $\partial_z:=\frac{\partial}{\partial z},\partial_{\bar{z}}:=\frac{\partial}{\partial \bar{z}}$. The function $e^{-K}$ gives a Hermitian metric on a line bundle $\mathcal{L}\rightarrow \mathcal{M}$ with connection $K_z$ and provides the K\"ahler potential for the Weil-Petersson metric on $\mathcal{M}$, whose components and Levi-Civita connection are given by $
G_{z\bar{z}} := \partial_z \partial_{\bar{z}} K\,,  \Gamma_{zz}^z= G^{z\bar{z}} \partial_z G_{z\bar{z}}\,.
$ The holomorphic Yukawa couplings are:
$C_{zzz} \in \Gamma\left( \mathcal{L}^2 \otimes \mathrm{Sym}^3 T^*\mathcal{M}\right)\,,
$
the curvature is expressed as :
 \begin{equation} 
-R_{z\bar{z}\phantom{z}z}^{\phantom{z\bar{z}}z}=[\partial_{\bar{z}},D_z]^z_{\phantom{z}z}=\bar{\partial}_{\bar{z}} \Gamma^z_{zz}= 2
G_{z\bar{z}}  - C_{zzz} \overline{C}^{zz}_{\bar{z}},
\label{curvature}
 \end{equation}
where $D_z$ denotes the covariant derivative and $\overline{C}_{\bar{z}}^{zz}:= e^{2K} G^{z\bar{z}} G^{z\bar{z}}\overline{C}_{\bar{z}\bar{z}\bar{z}}.$
This data defines a special K\"ahler manifold  \cite{Strominger:1990pd,Bershadsky:1993cx}.

We further introduce the objects $S^{zz},S^z,S$, which are non-holomorphic sections of $\mathcal{L}^{-2}\otimes \mathrm{Sym}^m T\mathcal{M}$ with $m=2,1,0$, respectively, and give local potentials for the non-holomorphic Yukawa couplings:
\begin{equation}\label{defpropagator}
\partial_{\bar{z}} S^{zz}= \overline{C}_{\bar{z}}^{zz}, \qquad
\partial_{\bar{z}} S^z = G_{z\bar{z}} S^{zz}, \qquad
\partial_{\bar{z}} S = G_{z \bar{z}} S^z.
%\label{prop}
\end{equation}
These are the propagators of the Feynman rules derived for $\mathcal{F}^{(g)}$ in Ref.~\cite{Bershadsky:1993cx}.

The topological string amplitudes at genus $g$ with $n$ insertions $\mathcal{F}^{(g)}_{n}$  are defined in Ref.~\cite{Bershadsky:1993cx} as non-holomorphic sections of the line bundles
$\mathcal{L}^{2-2g}$ over $\mathcal M$. These are only non-vanishing for
$(2g-2+n)>0$. They are related recursively in $n$ by $D_z \mathcal{F}^{(g)}_{n-1}=\mathcal{F}^{(g)}_{n},$ as well as in $g$ by the holomorphic anomaly equation for $g=1$ \cite{Bershadsky:1993ta}
\begin{equation}
\partial_{\bar{z}} \mathcal{F}^{(1)}_z = \frac{1}{2} C_{zzz}
\overline{C}^{zz}_{\bar{z}}+ (1-\frac{\chi}{24})
G_{z \bar{z}}\,, \label{anom2}
\end{equation}
where $\chi$ is the Euler character of the CY threefold $Y$. As well as for $g\geq 2$ \cite{Bershadsky:1993cx}:
\begin{equation}
\partial_{\bar{z}} \mathcal{F}^{(g)} = \frac{1}{2} \overline{C}_{\bar{z}}^{zz} \left(
\sum_{h=1}^{g-1}
D_z\mathcal{F}^{(h)} D_z\mathcal{F}^{(g-h)} +
D_zD_z\mathcal{F}^{(g-1)} \right) \label{anom1}.
\end{equation}

It was shown in Refs.~\cite{Yamaguchi:2004bt,Alim:2007qj} that for any CY threefold $\mathcal{F}^{(g)}_{n}$ is a polynomial of degree $3g-3+n$ in the generators $S^{zz},S^z,S,K_z$ where degrees $1,2,3,1$ were assigned to these generators respectively. The purely holomorphic part of the construction as well as the coefficients of the monomials are rational functions in the algebraic moduli. 
The recursive proof of this relies on the differential ring structure of the generators. For the purpose of this work we will only need the following \cite{Alim:2007qj}:
\begin{equation}\label{diffring}
D_z S^{zz} = 2  S^z  - C_{zzz}(S^{zz})^2 +
h_z^{zz}\,, 
\end{equation}
where $h_z^{zz}$ denotes a holomorphic function which is fixed by a choice of $S^{zz}$ satisfying Eq.~(\ref{defpropagator}). The expression for the curvature~\eqref{curvature} can be integrated to:
\begin{equation}
\Gamma^z_{zz} = 2 K_z  - C_{zzz} S^{zz} + s^z_{zz}\,,
\label{specgeom}
\end{equation}
with $s_{zz}^z$ a holomorphic function depending on the choice of $S^{zz}$. We use this to write out the following:
\begin{equation}\label{yukpol}
D_{z}C_{zzz} =\partial_z C_{zzz} - 3 s_{zz}^z C_{zzz} - 4 K_z C_{zzz}+ 3  C_{zzz}^2 S^{zz}\,.
\end{equation}
The holomorphic anomaly equations split into two equations \cite{Alim:2007qj}:
\begin{eqnarray}
\frac{\partial \mathcal{F}^{(g)}}{\partial S^{zz}} &=&  \frac{1}{2}
\sum_{h=1}^{g-1}
D_z\mathcal{F}^{(h)} D_z\mathcal{F}^{(g-h)} + \frac{1}{2}
D_z D_z\mathcal{F}^{(g-1)}, \nonumber
\\
0 &=& \frac{\partial \mathcal{F}^{(g)}}{\partial K_z} + S^z \frac{\partial \mathcal{F}^{(g)}}{\partial
S} + S^{zz} \frac{\partial \mathcal{F}^{(g)}}{\partial S^z}\,. \label{polanom}
\end{eqnarray}

%%%%%%%%%%%%%%%%%%%%%%%%%%%%%%%%%

\section{All genus differential equation}\label{secBessel}
We determine in the following the all genus coefficients of particular monomials appearing in the polynomial formulation of the free energies $\mathcal{F}^{(g)}$. To this end we derive a differential equation in a rescaled topological string coupling for the partition function which can be transformed into an Airy equation.

%%%%%%%%%%%%%%%%%%%%%%%%%%%%%%%%%%%%%%%

\subsection{Scaling limit}

In the polynomial expression of $\mathcal{F}^{(g)}$ we consider the highest degree term in the generator $S^{zz}$ which is a monomial of the form:
$f(z)(S^{zz})^{3g-3} $, where $f(z)$ is a rational function of the modulus $z$. From the Feynman diagram rules of Ref.~\cite{Bershadsky:1993cx} and the polynomial structure reviewed in Sec.~\ref{secpolystructure} we know that it is of the form:
$
f(z)= a_g \,C_{zzz}^{2g-2}, a_{g}\in \mathbb{Q}
$.

The corresponding set of graphs in the Feynman diagrams includes in particular those special graphs $\Gamma$ which correspond to \emph{the most degenerate} Riemann surfaces of arithmetic genus $g$ and of geometric genus $0$. We denote this set by $\overline{\mathcal{M}}_{g,\,\mathrm{cubic}}$. A Riemann surface in this set is obtained by gluing $2g-2$ genus zero Riemann surfaces, each one has three markings, along the markings pairwise. The dual graph is obtained by gluing the cubic vertices along the half-edges. The arithmetic genus is then the number of loops or equivalently the first Betti number of the dual graph.  Among the configurations in $\overline{\mathcal{M}}_{g}$, these dual graphs have the largest possible number of loops. The monomials we focus on furthermore receive contributions from further decomposing the vertices which are given by higher genus amplitudes with insertions.

We introduce the total free energy, omitting $g=0,1$,
\begin{equation}
\mathcal{F} (S^{zz},S^z,S,K_z;z,\lambda) =\sum_{g=2}^{\infty} \lambda^{2g-2} \mathcal{F}^{(g)}(S^{zz},S^z,S,K_z;z)\,.
\end{equation}

To select the terms $a_g \,C_{zzz}^{2g-2}(S^{zz})^{3g-3}$ from $\mathcal{F}$ we rescale the generators $S^{zz}$ as well as the topological string coupling $\lambda$ with $\varepsilon$ in the following way: 
$
\tilde{S}^{zz} = \varepsilon^{2/3} S^{zz}, \tilde{\lambda} = \frac{\lambda}{\varepsilon}
$,
and define:
\begin{equation}\label{scalelimit}
\mathcal{F}_s (\lambda_s) = \lim_{\varepsilon \rightarrow 0} \mathcal{F} (\tilde{S}^{zz},S^z,S,K_z;z,\tilde{\lambda})= \sum_{g=2}^{\infty} a_g \lambda_s^{2g-2} \, \,,
\end{equation}
where we defined the rescaled coupling
$
\lambda_s^2= \lambda^2\, C_{zzz}^{2} (S^{zz})^3
$.

\subsection{Differential equation for all genus free energy}
We will now use the holomorphic anomaly equations in their polynomial form (\ref{polanom}) to derive an equation governing the coefficients $\{a_g\}$ defined in Eq.~(\ref{scalelimit}).

\begin{prop}
The all genus free energy in the scaling limit $\mathcal{F}_s$ satisfies the differential equation:
\begin{equation}\label{free}
\theta_{\lambda_s}^2 \mathcal{F}_s + (\theta_{\lambda_s} \mathcal{F}_s)^2 + 2 \left(1-\frac{2}{3\lambda_s^2} \right) \theta_{\lambda_s} \mathcal{F}_s +\frac{5}{9} =0 \,, \quad \theta_{\lambda_s} = \lambda_s \frac{\partial}{\partial \lambda_s}\,.
\end{equation}
\end{prop}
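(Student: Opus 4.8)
The plan is to sidestep a genus-by-genus analysis and instead resum the polynomial holomorphic anomaly equations \eqref{polanom} into a single master equation, after which the scaling limit \eqref{scalelimit} can be read off by reinterpreting the operators $\partial/\partial S^{zz}$ and $D_z$ as multiples of $\theta_{\lambda_s}$. Multiplying the first line of \eqref{polanom} by $\lambda^{2g-2}$ and summing over $g\ge 2$ collapses the convolution and the double-derivative term, giving
\begin{equation}
\frac{\partial \mathcal{F}}{\partial S^{zz}} = \frac{\lambda^2}{2}\left(D_z \widehat{\mathcal{F}}\right)^2 + \frac{\lambda^2}{2}\, D_z^2 \widehat{\mathcal{F}}\,, \qquad \widehat{\mathcal{F}} := \sum_{g\ge 1}\lambda^{2g-2}\mathcal{F}^{(g)} = \mathcal{F}^{(1)} + \mathcal{F}\,,
\end{equation}
where the uniform shift by one unit of $\lambda^2$ in both terms comes from the range $1\le h\le g-1$ of the sum, and crucially the genus-one amplitude enters as a seed through $\widehat{\mathcal{F}}$.

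Next I would build the dictionary between the polynomial derivations and $\theta_{\lambda_s}$ in the limit. Because the rescaling $\tilde{S}^{zz}=\varepsilon^{2/3}S^{zz}$, $\tilde{\lambda}=\lambda/\varepsilon$ retains only the top power $(S^{zz})^{3g-3}$ at each genus, the surviving data is organized by the invariant $\lambda_s^2=\lambda^2 C_{zzz}^2 (S^{zz})^3$. Using the ring relations \eqref{diffring} and \eqref{yukpol} at leading order in $S^{zz}$, namely $D_z S^{zz}=-C_{zzz}(S^{zz})^2+\dots$ and $D_z C_{zzz}=3 C_{zzz}^2 S^{zz}+\dots$, I find $D_z \lambda_s^2 = 3\,C_{zzz}S^{zz}\,\lambda_s^2$ and $\partial_{S^{zz}}\lambda_s^2 = 3\lambda_s^2/S^{zz}$ to leading order, so that acting on any function of $\lambda_s$,
\begin{equation}
D_z \;\longrightarrow\; \tfrac{3}{2}\, C_{zzz} S^{zz}\,\theta_{\lambda_s}\,, \qquad \frac{\partial}{\partial S^{zz}} \;\longrightarrow\; \frac{3}{2 S^{zz}}\,\theta_{\lambda_s}\,.
\end{equation}
The only place $\widehat{\mathcal{F}}$ fails to be a plain function of $\lambda_s$ is the genus-one seed: integrating \eqref{anom2} gives the leading term $D_z\mathcal{F}^{(1)}=\tfrac12 C_{zzz}S^{zz}+\dots$, an additive piece of order $\lambda_s^{0}$. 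Consequently in the limit $D_z\widehat{\mathcal{F}}\to C_{zzz}S^{zz}\big(\tfrac12+\tfrac32\theta_{\lambda_s}\mathcal{F}_s\big)$, and applying $D_z$ once more (again keeping the top $S^{zz}$ power) yields $D_z^2\widehat{\mathcal{F}}\to (C_{zzz}S^{zz})^2\big(1+3\,\theta_{\lambda_s}\mathcal{F}_s+\tfrac94\theta_{\lambda_s}^2\mathcal{F}_s\big)$.

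Substituting these into the master equation, converting every prefactor through $\lambda^2 C_{zzz}^2 (S^{zz})^2=\lambda_s^2/S^{zz}$, and clearing the common $1/S^{zz}$, I would reduce the whole thing to
\begin{equation}
3\,\theta_{\lambda_s}\mathcal{F}_s = \lambda_s^2\Big[\tfrac94\theta_{\lambda_s}^2\mathcal{F}_s + \tfrac94(\theta_{\lambda_s}\mathcal{F}_s)^2 + \tfrac92\theta_{\lambda_s}\mathcal{F}_s + \tfrac54\Big]\,,
\end{equation}
and dividing by $\tfrac94\lambda_s^2$ reproduces \eqref{free} exactly, with $\tfrac{3}{9/4}=\tfrac43$ generating the $1/\lambda_s^2$ term and the seed-induced constant $\tfrac54$ generating the inhomogeneous $\tfrac59$. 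The content of the proposition is thus that the genus-one boundary data feeds coherently into every order, and the cross term $\tfrac32\theta_{\lambda_s}\mathcal{F}_s$ from $(D_z\widehat{\mathcal{F}})^2$ is what couples the seed to the higher genera.

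The main obstacle I expect is faithfully bookkeeping the leading order in $S^{zz}$ through two covariant derivatives. I must check that (i) subleading monomials of each $\mathcal{F}^{(h)}$ cannot be promoted by $D_z$ into the top power controlling $D_z\mathcal{F}^{(h)}$ or $D_z^2\mathcal{F}^{(h)}$ — which holds because each generator's covariant derivative raises the $S^{zz}$-degree by at most one, so a term of $S^{zz}$-degree $p$ maps to degree $\le p+1$ — and (ii) the numerical coefficients $\tfrac94,\tfrac92,\tfrac54$ emerge exactly rather than merely up to proportionality, since these are precisely what fix $\tfrac59$ and the $1/\lambda_s^2$ prefactor. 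As an independent safeguard I would extract the $\lambda_s^{2m-2}$ coefficient of \eqref{free}, obtaining a recursion for the $a_g$ whose lowest instance forces $a_2=5/24$; verifying this known seed value by hand before trusting the resummed form would confirm that the rational constants have been tracked correctly.
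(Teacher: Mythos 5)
Your proposal is correct: the master-equation resummation, the chain-rule dictionary $D_z \to \tfrac{3}{2}C_{zzz}S^{zz}\theta_{\lambda_s}$, $\partial/\partial S^{zz} \to \tfrac{3}{2S^{zz}}\theta_{\lambda_s}$ (both follow from $D_z\lambda_s^2 = 3C_{zzz}S^{zz}\lambda_s^2$ and $\partial_{S^{zz}}\lambda_s^2 = 3\lambda_s^2/S^{zz}$ at leading order), the genus-one seed, and the final arithmetic producing $2\left(1-\tfrac{2}{3\lambda_s^2}\right)$ and $\tfrac{5}{9}$ all check out. Your route is, however, organized differently from the paper's. The paper never forms a master equation: it works genus by genus, writing $\mathcal{F}^{(g)} = a_g C_{zzz}^{2g-2}(S^{zz})^{3g-3} + \textrm{l.o.t.}$, computing $D_z\mathcal{F}^{(g)} = (3g-3)a_g C_{zzz}^{2g-1}(S^{zz})^{3g-2} + \textrm{l.o.t.}$ from the same two ring relations, fixing $a_2 = \tfrac{5}{24}$ by a separate $g=2$ computation, and extracting for $g\ge 3$ the quadratic recursion
\begin{equation}
(3g-3)\,a_g = \frac{1}{2}\left(\sum_{h=2}^{g-2} a_h a_{g-h}(3h-3)(3g-3h-3) + a_{g-1}(3g-6)(3g-3)\right),
\end{equation}
which is then summed against $\lambda^{2g-2}$ to yield \eqref{free}. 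Your dictionary is exactly the generating-function repackaging of that coefficient computation ($\tfrac{3}{2}\theta_{\lambda_s}$ acting on $\lambda_s^{2g-2}$ produces the factor $3g-3$), so the two proofs are term-by-term equivalent; what yours buys is uniformity — there is no separate genus-two step, since $a_2=\tfrac{5}{24}$ and the inhomogeneous $\tfrac{5}{9}$ emerge together from the constant term of the ODE — while the paper's version makes the recursion for the $a_g$ explicit, which is what it actually sums. Your safeguard (i), that $D_z$ raises the $S^{zz}$-degree of any monomial by at most one so that subleading terms can never be promoted to the top degree, is the correct and necessary justification; it is precisely what the paper's ``l.o.t.'' bookkeeping tacitly relies on, and stating it explicitly is a small improvement in rigor over the paper's presentation.
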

\begin{proof}

We first introduce the notation:
\begin{equation}
\mathcal{F}^{(g)}= a_g C_{zzz}^{2g-2} (S^{zz})^{3g-3} + \textrm{l.o.t.}\,,\quad g\geq 2\,,
\end{equation}
where we denote by $\textrm{l.o.t.}$ all monomials which vanish in the partition function in the limit described above.
We obtain further from Eqs.~(\ref{diffring}, \ref{yukpol}):
\begin{equation}
D_z \mathcal{F}^{(g)}= (3g-3) a_g C_{zzz}^{2g-1} (S^{zz})^{3g-2} + \textrm{l.o.t.}\,.
\end{equation}
At $g=1$, the integration of Eq.~(\ref{anom2}) becomes:
\begin{equation}\label{genus1}
\mathcal{F}^{(1)}_z:=D_z \mathcal{F}^{(1)}=\frac{1}{2} C_{zzz} S^{zz} + \textrm{l.o.t.}\,.
\end{equation}
At $g=2$, multiplying both sides of Eq.~(\ref{polanom}) by $S^{zz}$ fixes $a_2=\frac{5}{24}$. For $g\ge 3$ the L.H.S. becomes:
\begin{equation}
S^{zz} \frac{\partial \mathcal{F}^{(g)}}{\partial S^{zz}}= a_g (3g-3) C_{zzz}^{2g-2} (S^{zz})^{3g-3} + \textrm{l.o.t.}\,,
\end{equation}
and the R.H.S. is:
\begin{equation}\label{genus3}
\frac{1}{2} \left( \sum_{h=2}^{g-2} a_h a_{g-h} (3h-3)(3g-3h-3) + a_{g-1} (3g-6) (3g-3) \right) C_{zzz}^{2g-2} (S^{zz})^{3g-3} + \textrm{l.o.t.}\,.
\end{equation}
Eq.~(\ref{free}) is obtained from the summation:
\begin{equation}
\sum_{g=2}^{\infty } \lambda^{2g-2} S^{zz} \frac{\partial \mathcal{F}^{(g)}}{\partial S^{zz}}= \sum_{g=2}^{\infty } \lambda^{2g-2} \frac{S^{zz}}{2} \left( \sum_{h=1}^{g-1}D_z\mathcal{F}^{(h)} D_z \mathcal{F}^{(g-h)} + D_z D_z \mathcal{F}^{(g-1)}\right)\,.
\end{equation}
\end{proof}

%%%%%%%%%%%%%%%%%%%%%%%%%%%%%%%%%%%%%%%%%
\subsection{Modified Bessel equation}
The equation for the partition function $\mathcal{Z}_{top,s}=\exp \mathcal{F}_{s}$ in the scaling limit becomes:
\begin{prop} $\mathcal{Z}_{top,s}(\lambda_s)$ satisfies the following differential equation:
\begin{equation} \label{eqmodifiedBessel}
\left( \left(\theta_{\frac{1}{3\lambda_s^2}}\right)^2- \left(\left(\frac{1}{3\lambda_s^2}\right)^2+ \frac{1}{9}\right)\right)  \lambda_s e^{1/3\lambda_s^2}  \mathcal{Z}_{top,s}=0\,, \quad \theta_{\frac{1}{3\lambda_s^2}}:=\frac{1}{3\lambda_s^2}\,  \frac{\partial}{\partial \left(\frac{1}{3\lambda_s^2}\right)} \,.
\end{equation}
\end{prop}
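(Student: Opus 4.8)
The plan is to convert the nonlinear equation \eqref{free} for $\mathcal{F}_s$ into a \emph{linear} second-order equation for $\mathcal{Z}_{top,s}=\exp\mathcal{F}_s$, then change the independent variable to $x:=\tfrac{1}{3\lambda_s^2}$, and finally absorb the prefactor $\lambda_s e^{1/3\lambda_s^2}$ by a gauge (conjugation) transformation. For the first step I would exponentiate: since $\theta_{\lambda_s}\mathcal{Z}_{top,s}=(\theta_{\lambda_s}\mathcal{F}_s)\,\mathcal{Z}_{top,s}$ and $\theta_{\lambda_s}^2\mathcal{Z}_{top,s}=\big(\theta_{\lambda_s}^2\mathcal{F}_s+(\theta_{\lambda_s}\mathcal{F}_s)^2\big)\mathcal{Z}_{top,s}$, the first two terms of \eqref{free} combine into $\theta_{\lambda_s}^2\mathcal{Z}_{top,s}/\mathcal{Z}_{top,s}$ and the third into $\theta_{\lambda_s}\mathcal{Z}_{top,s}/\mathcal{Z}_{top,s}$. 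Hence \eqref{free} is equivalent to
\[
\theta_{\lambda_s}^2\mathcal{Z}_{top,s}+2\Big(1-\tfrac{2}{3\lambda_s^2}\Big)\theta_{\lambda_s}\mathcal{Z}_{top,s}+\tfrac59\,\mathcal{Z}_{top,s}=0.
\]

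Next I would change variables to $x=\tfrac{1}{3\lambda_s^2}$. From $x=\tfrac13\lambda_s^{-2}$ one reads off the operator identity $\theta_{\lambda_s}=-2\,\theta_x$, hence $\theta_{\lambda_s}^2=4\,\theta_x^2$; dividing the previous display by $4$ and writing $1-\tfrac{2}{3\lambda_s^2}=1-2x$ turns it into
\[
\theta_x^2\mathcal{Z}_{top,s}-(1-2x)\,\theta_x\mathcal{Z}_{top,s}+\tfrac{5}{36}\,\mathcal{Z}_{top,s}=0.
\]

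It then remains to conjugate by the gauge factor $g(x):=\lambda_s e^{1/3\lambda_s^2}=(3x)^{-1/2}e^{x}$, i.e.\ to set $\mathcal{W}=g\,\mathcal{Z}_{top,s}$ and to show that the target \eqref{eqmodifiedBessel}, namely $(\theta_x^2-x^2-\tfrac19)\mathcal{W}=0$, is equivalent to the equation just obtained. The computation rests on the single identity $\theta_x g=(x-\tfrac12)g$, which gives $g^{-1}\theta_x g=\theta_x+(x-\tfrac12)$ as operators; squaring this and using $[\theta_x,x]=x$ yields $g^{-1}\theta_x^2 g=\theta_x^2+2(x-\tfrac12)\theta_x+x+(x-\tfrac12)^2$. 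Multiplying $(\theta_x^2-x^2-\tfrac19)\mathcal{W}=0$ on the left by $g^{-1}$ then reproduces the previous display exactly, since the first-order coefficient collapses to $2(x-\tfrac12)=-(1-2x)$ and the constant to $x+(x-\tfrac12)^2-x^2-\tfrac19=\tfrac14-\tfrac19=\tfrac{5}{36}$. This last step is the only delicate one: because $\theta_x$ does not commute with multiplication by $g$, one must carry the commutator $[\theta_x,x]=x$ through the squaring, and the whole statement hinges on the exact arithmetic cancellation that produces $-(1-2x)$ and $\tfrac{5}{36}$. Everything else is a routine application of the exponential substitution and the chain rule.
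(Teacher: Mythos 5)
Your proposal is correct, and it follows exactly the route the paper takes implicitly: the paper presents this proposition as the rewriting of Eq.~(\ref{free}) for $\mathcal{Z}_{top,s}=\exp\mathcal{F}_s$, leaving the exponentiation, the change of variable to $\tfrac{1}{3\lambda_s^2}$, and the conjugation by $\lambda_s e^{1/3\lambda_s^2}$ as an unstated computation. You have supplied precisely that computation, and every step checks out, including the careful handling of the commutator $[\theta_x,x]=x$ and the arithmetic cancellations giving $-(1-2x)$ and $\tfrac{5}{36}$.
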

This is the modified Bessel differential equation in terms of the variable $\frac{1}{3\lambda_s^2}$ and the general solution in terms of the modified Bessel functions  $I_{1/3},K_{1/3}$ is given by: 
\begin{equation}\label{eqpartitionfunctionsol}
\mathcal{Z}_{top,s}= \frac{e^{-\frac{1}{3 \lambda_s^2}}}{\lambda_s}  \left(  c_1  I_{\frac{1}{3}}\left(\frac{1}{3
   \lambda_s^2}\right)  + c_2 K_{\frac{1}{3}}\left(\frac{1}{3
   \lambda_s^2}\right)\right) \,.
\end{equation}

Now we discuss the asymptotic behavior of the modified Bessel functions (see e.g., Ref.~\cite{Erdelyi:1981vol2}). If $c_{1}=0$ the series around $\lambda_s=0$ coming from  $K_{1/3}$ is trivial. If $c_{1}\neq 0$, the asymptotic series expansion around $\lambda_s=0$  is independent of $c_{2}$. This gives the sequence $\{a_{g}\}_{g\geq 2}$ up to a constant:
\begin{equation}\label{eqseriesexpansion}
\mathcal{F}_{s}=\left(-{1\over 3\lambda_{s}^2}-\ln \lambda_{s}\right)+\left({1\over 3\lambda_{s}^2}+\ln \lambda_{s}+{5\over 24}\lambda_{s}^{2}+{5\over 16}\lambda_{s}^{4}+
{1105  \over 1152}\lambda_{s}^{6}+\cdots\right)\,.
\end{equation}
The first two terms come from the prefactor in Eq.~(\ref{eqpartitionfunctionsol}) and can be regarded as the contribution of genus zero and genus one free energies to $\mathcal{F}_{s}$ while the rest from the modified Bessel function part.
We fix $c_1=1$ and parameterize the general solution with an additional parameter $\zeta$ by:
\begin{equation}\label{eqnonpertcompletion}
\mathcal{Z}_{top,s}= \frac{e^{-\frac{1}{3 \lambda_s^2}}}{\lambda_s}\left(I_{{1\over 3}}\left({1\over 3\lambda_{s}^2}\right)+\zeta K_{{1\over 3}}\left({1\over 3\lambda_{s}^2}\right)\right)\,,\quad \zeta\in \mathbb{C}\,.
\end{equation}
In particular $\zeta$ does not affect the perturbative expansion around $\lambda_s=0$ but becomes relevant away from this locus. It can be interpreted as giving a non-perturbative correction to the perturbative series of topological string partition function.

%%%%%%%%%%%%%%%%%%%%%%%%%%%%%%%%%%%%%%%%%%%%

\subsection{Strong coupling expansion}
Eq.~(\ref{eqmodifiedBessel}) has two apparent singularities: the one at $\lambda_{s}=0$ is an irregular singularity, while the one at $\lambda_{s}=\infty$ is a regular singularity. 
Expanding $\mathcal{Z}_{top,s}$ near $\lambda_{s}=\infty$, we get
\begin{eqnarray}
\mathcal{Z}_{top,s}
&=&\frac{e^{-\frac{1}{3 \lambda_s^2}}}{\lambda_s} \left(
\left(1-{\pi\over \sqrt{3}}\zeta\right) I_{1/3}\left({1\over 3 \lambda_s^2}\right)+{\pi\over \sqrt{3}}\zeta I_{-1/3}\left({1\over 3 \lambda_s^2}\right)\right)
 \label{eqscalinglimitasymp} \nonumber
\\
&=& 6^{-{1\over 3 }} \left(1-{\pi\over \sqrt{3}}\zeta \right) e^{-\frac{1}{3 \lambda_s^2}}
\lambda_{s}^{-{5\over 3}} \sum_{n=0}^{\infty}  {({1\over 2} {1\over 3\lambda_s^2})^{2n}\over n! \Gamma(n+{4\over 3})}\\ \nonumber
&+&
6^{{1\over 3}}{2\pi\over \sqrt{3}}\zeta  e^{-\frac{1}{3 \lambda_s^2}}   \lambda_s^{-{1\over 3}}    \sum_{n=0}^{\infty}  {({1\over 2} {1\over 3\lambda_s^2})^{2n}\over n! \Gamma(n+{2\over 3})}\,.
\end{eqnarray}
Up to irrelevant factors, one has near $\lambda=\infty$ the following:
\begin{equation}
\mathcal{F}_{s}=-{1\over 3\lambda_{s}^{2}}-{1\over 3}\ln \lambda_{s}+ 6^{-{2\over 3}} {1-{\pi\over \sqrt{3}}\zeta \over {2\pi \over \sqrt{3}} \zeta}\lambda_{s}^{-{4\over 3}}+\mathcal{O}(\lambda_{s}^{-4})\,.
\end{equation}
 
%%%%%%%%%%%%%%%%%%%%%%%%%%%%%%%%%% 
 
\subsection{Airy equation}\label{secAiry}

We can further make the change of variables: $
z=(2\lambda_{s}^2)^{-{2\over 3}}, v=2^{-{1\over 3}}e^{1\over 3\lambda_{s}^{2} } \lambda_{s}^{{1\over 3}} \mathcal{Z}_{top,s} $.
Eq.~(\ref{eqmodifiedBessel}) then becomes the Airy equation:
\begin{equation}
\left(\partial_{z}^{2}-z\right)v(z)=0\,.
\end{equation}
This offers a more geometric picture of the non-analytic behavior of the modified Bessel functions. Applying the Laplace transform, one obtains:
\begin{equation}
v(z)={1\over 2\pi i}\int_{\gamma} e^{zw}e^{-{1\over 3}w^{3}}dw\,.
\end{equation}
The integral contour $\gamma$ on the $w$-plane has to be chosen such that the integrand vanishes at the boundary.
There are essentially three choices for them given by $\gamma_{i},i=1,2,3$, satisfying the homology relation
$\gamma_{1}+\gamma_{2}+\gamma_{3}\sim 0$ as depicted in Fig.~\ref{contours}.

\begin{figure}[h]
\centering
\includegraphics[width=0.3\textwidth]{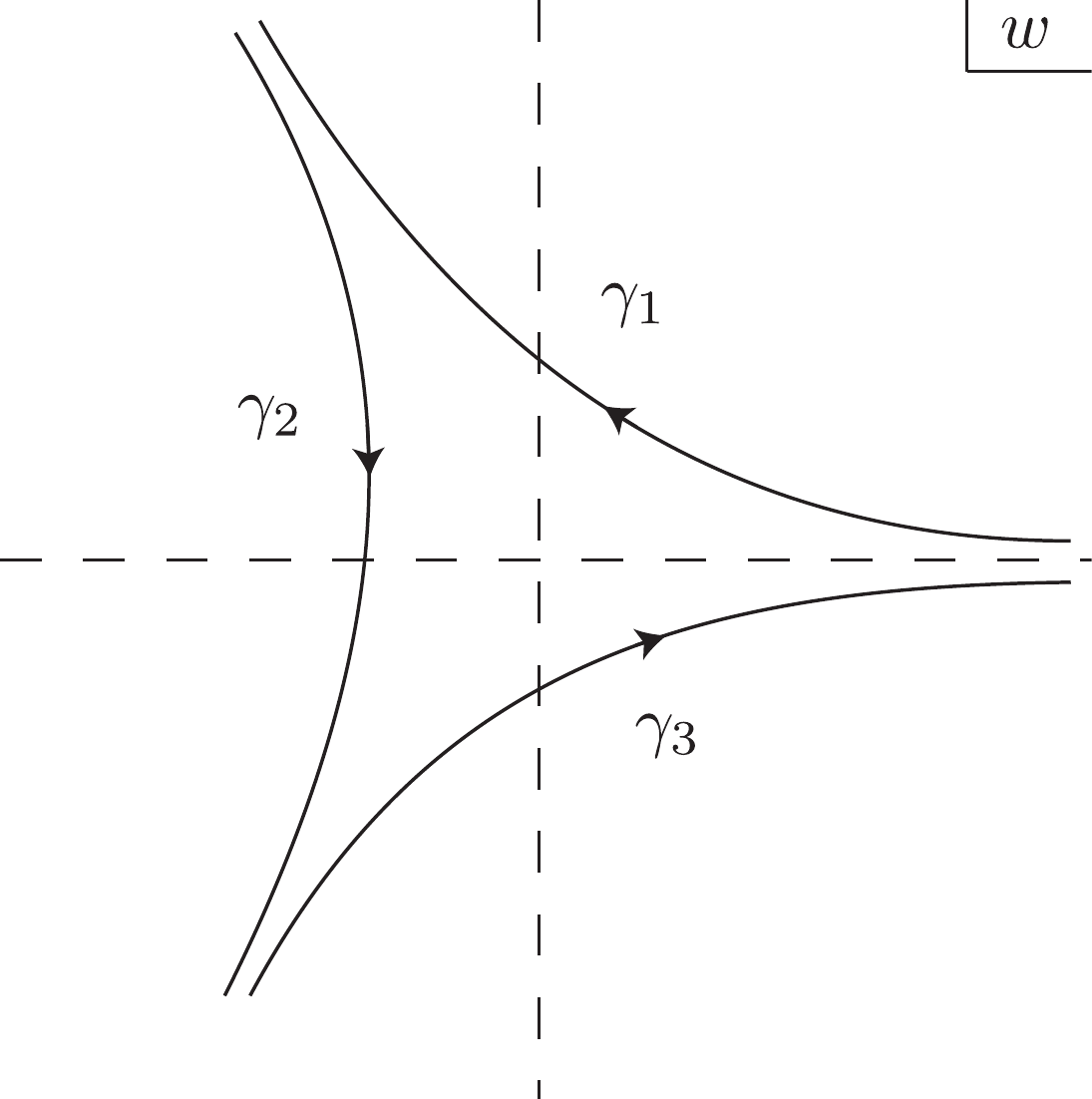} 
\caption{The contours $\gamma_1,\gamma_2$ and $\gamma_3$ on the  $w-$plane.}
\label{contours}
\end{figure}

The Airy function $\mathrm{Ai}(z)$ as a solution to this equation corresponds to the modified Bessel function $K_{1\over 3}$ and is given by:
\begin{equation}
\mathrm{Ai}(z)={1\over \pi \sqrt{3}}z^{1\over 2}K_{{1\over 3}}\left({2\over 3} z^{3\over 2}\right)
%=\sqrt{{z\over 3}} \left(     I_{-{1\over 3}}\left({2\over 3} z^{3\over 2}\right)-I_{{1\over 3}}\left({2\over 3} z^{3\over 2}\right)    \right)
={1\over 2\pi i}\int_{-\gamma_{2}} e^{zw}e^{-{1\over 3}w^{3}}dw\,.
\end{equation}
It has exponential decay as $z\rightarrow \infty, \arg z=0$ and is oscillating as $z\rightarrow \infty, \arg z=\pi$. 
The other independent solution is usually taken to be the function
\begin{equation}
\mathrm{Bi}(z)=\sqrt{{z\over 3}} \left(     I_{{1\over 3}}\left({2\over 3} z^{3\over 2}\right)+I_{-{1\over 3}}\left({2\over 3} z^{3\over 2}\right)    \right)
={1\over 2\pi i}\int_{i\gamma_{2}+2i\gamma_{3}} e^{zw}e^{-{1\over 3}w^{3}}dw\,.
\end{equation}
It has exponential growth as $z\rightarrow \infty, \arg z=0$ and is oscillating as $z\rightarrow \infty, \arg z=\pi$. 
The same integral would have different asymptotic series expansions as the phase of $z$ changes, exhibiting Stokes phenomena as discussed in e.g., Refs.~\cite{Marino:2012zq,Witten:2010cx}.

The partition function can be written as:
\begin{equation}
\mathcal{Z}_{top,s}
=2^{1\over 3}e^{-{1\over 3\lambda_{s}^{2}}} \lambda_{s}^{-{1\over 3}}v\left(\left({1\over 2\lambda_{s}^2}\right)^{2\over 3}\right)\,.
\end{equation}
The integral over $-\gamma_{2}$ gives $\mathrm{Ai}(z)$ and thus the purely non-perturbative part of $\mathcal{Z}_{top,s}$, while the integrals over the other two cycles $\gamma_{1},\gamma_{3}$ are equivalent modulo the non-perturbative expressions and contribute to the perturbative part of $\mathcal{Z}_{top,s}$. Analytic continuation and exploring the non-perturbative content can thus be realized via moving the integral contour as discussed in many other contexts, such as in Ref.~\cite{Witten:2010cx}.

%%%%%%%%%%%%%%%%%%%%%%%%%%%%%%%%%%%%%%%%%%%%%%%%%%%%

\section{Conclusion and discussion}\label{seccon}

In this work we determined certain terms of the topological string free energies to all orders in perturbation theory universally for any CY threefold. Using the polynomial formulation of the holomorphic anomaly equations we derived an Airy equation in a rescaled topological string coupling which has a solution encoding these terms. This solution admits a strong coupling expansion which should be the analog of an expansion considered in Ref.~\cite{Drukker:2010nc} for ABJM theory \cite{Aharony:2008ug}. It would be exciting to develop a similar interpretation which can be attached to any CY background. 

A second linearly independent solution of the Airy equation does not contribute to the perturbative expansion around $\lambda_s=0$ but gives contributions away from this value. This is a manifestation of a non-perturbative ambiguity attached to differential equations as it appears in many other contexts, see Ref.~\cite{Marino:2012zq}. This may offer geometric hints of the non-perturbative structure of topological string theory. Indeed the Airy function appears in the recent work \cite{Grassi:2014zfa} defining non-perturbative topological strings for some non-compact CY threefolds. 

Differential equations in the parameters of a theory often suggest an underlying geometric interpretation and perhaps a variation problem associated to it. Our results may provide the first steps in this direction. It is perhaps suggestive to think about the topological string coupling $\lambda$ as a complex variable, paralleling the discussion of analytic continuation of Chern-Simons theory \cite{Witten:2010cx}. Furthermore, the non-perturbative structure of topological strings was also addressed using methods of trans-series and resurgence of differential equations, see e.g., Ref.~\cite{Couso-Santamaria:2014iia} and references therein. These methods were applied to the master anomaly equation which is not a differential equation in the topological string coupling. Our results show that a differential equation in the coupling can be deduced, giving further motivation to pursue this line of research.

The limit described in the paper seems to select only the cubic graphs or equivalently the geometric genus zero contribution to the topological string partition function. The similarities to Refs.~\cite{Witten:1990hr, Kontsevich:1992ti,Aganagic:2003qj} suggest to further study an associated integrable hierarchy structure of the partition function. It is natural to expect, from examining the Feynman diagrams, that the lower degree terms in the polynomial generator should correspond to Riemann surfaces with higher geometric genera. In this sense $\mathcal{Z}_{top,s}$ is really the classical part with respect to the loop expansion in the parameter $\varepsilon$ assigned to the geometric genus, and the full partition function $\mathcal{Z}_{top}$ is its quantization. This parameter plays an analogous role as the equivariant parameter in localization. The interpretation of this parameter in the A-model would have the effect of reducing the equivariant localization to the stratum corresponding to the most degenerate configurations in the moduli space of curves. It would be furthermore interesting to understand a more direct physical meaning of the parameter $\varepsilon$. These questions will be addressed elsewhere.

\subsection*{Acknowledgements}
We would like to thank Peter Mayr, Ilarion Melnikov and especially Marcos Marino for very helpful discussions and comments on the manuscript. We would also like to thank Kevin Costello, Jaume Gomis, Daniel Jafferis, Hee-Cheol Kim, Peter Koroteev and Si Li for discussions and correspondence.  M.~A. is supported by NSF grant PHY-1306313. J.~Z.~is supported by the Perimeter Institute for Theoretical Physics. Research at Perimeter Institute is  supported  by  the  Government  of  Canada  through  Industry  Canada  and  by  the
Province of Ontario through the Ministry of Economic Development and Innovation.

%%%%%%%%%%%%%%%%%%%%%%%%%%%%%%%%%%%%%%%%%%%%%%%%%%%%

\providecommand{\href}[2]{#2}\begingroup\raggedright\endgroup

%\bibliography{airy}
%\bibliographystyle{utphys}
%\bibliographystyle{alpha}
%\bibliographystyle{plain}

\end{document}